%
\documentclass[10pt,conference]{IEEEtran}


\def\real{{\mathchoice%
{\hbox{\rm\setbox1=\hbox{I}\copy1\kern-.45\wd1 R}}
{\hbox{\rm\setbox1=\hbox{I}\copy1\kern-.45\wd1 R}}
{\hbox{\scriptsize\rm\setbox1=\hbox{I}\copy1\kern-.45\wd1 R}}
{\hbox{\scriptsize\rm\setbox1=\hbox{I}\copy1\kern-.45\wd1 R}}}}
\def\Zint{{\mathchoice{\setbox1=\hbox{\sf Z}\copy1\kern-.75\wd1\box1}
{\setbox1=\hbox{\sf Z}\copy1\kern-.75\wd1\box1}
{\setbox1=\hbox{\scriptsize\sf Z}\copy1\kern-.75\wd1\box1}
{\setbox1=\hbox{\scriptsize\sf Z}\copy1\kern-.75\wd1\box1}}}
\newcommand{\complex}{ \hbox{\rm C\kern-0.45em\rule[.07em]{.02em}{.58em}%
\kern 0.43em}}

\newcommand{\be}{\begin{equation}}
\newcommand{\ee}{\end{equation}}
\newcommand{\beqr}{\begin{eqnarray}}
\newcommand{\eeqr}{\end{eqnarray}}
\newcommand{\beqrx}{\begin{eqnarray*}}
\newcommand{\eeqrx}{\end{eqnarray*}}
\newcommand{\ba}{\left[ \begin{array}}
\newcommand{\ea}{\\ \end{array} \right]}
\newcommand{\bi}{\begin{itemize}}
\newcommand{\ei}{\end{itemize}}

\newtheorem{lemma}{Lemma}
\newtheorem{theorem}{Theorem}

\usepackage{color}
\definecolor{DarkBlue}{rgb}{0.1,0.1,0.5}
\definecolor{Red}{rgb}{0.9,0.0,0.1}

\usepackage{epsfig}
\usepackage{amsmath}
\usepackage{amssymb}
\usepackage{psfrag}
\usepackage{graphicx,subfig}

\newcommand{\Prob}{\ensuremath{\mathbb{P}}}


\def\w{{\bf w}}

\def\x{{\bf x}}
\def\y{{\bf y}}

\def\e{{\bf e}}

\def\x{{\mathbf x}}

\def\w{{\bf w}}

\def\x{{\bf x}}
\def\y{{\bf y}}

\def\Prob{{\rm P}\,}

\def\real{{\rm Re}\,}

\def\be{\begin{equation}}
\def\ee{\end{equation}}
\def\ba{\left[\begin{array}}
\def\ea{\end{array}\right]}


\begin{document}

\title{Weighted $\ell_1$ Minimization for Sparse Recovery with Prior Information}

\author{
M.~Amin Khajehnejad\\
Caltech EE\\
Pasadena CA, USA \\
{\sffamily amin@caltech.edu} \and
Weiyu Xu\\
Caltech EE \\
Pasadena CA, USA \\
{\sffamily weiyu@altech.edu} \and
A.~Salman Avestimehr\\
Caltech CMI \\
Pasadena CA, USA \\
{\sffamily avestim@caltech.edu} \and
Babak Hassibi\\
Caltech EE \\
Pasadena CA, USA \\
{\sffamily hassibi@caltech.edu}}

\maketitle

\begin{abstract}
In this paper we study the compressed sensing problem of recovering a
sparse signal from a system of underdetermined linear equations when
we have prior information about the probability of each entry of the
unknown signal being nonzero. In particular, we focus on a model where
the entries of the unknown vector fall into two sets, each with a
different probability of being nonzero. We propose a weighted $\ell_1$
minimization recovery algorithm and analyze its performance using a
Grassman angle approach. We compute explicitly the relationship between
the system parameters (the weights, the number of measurements, the
size of the two sets, the probabilities of being non-zero) so that an
iid random Gaussian measurement matrix along with weighted $\ell_1$
minimization recovers almost all such sparse signals with overwhelming
probability as the problem dimension increases. This allows us to
compute the optimal weights. We also provide simulations to
demonstrate the advantages of the method over conventional $\ell_1$
optimization.
\end{abstract}

\section{Introduction } \label{sec:Intro}
Compressed sensing is an emerging technique of joint sampling and
compression that has been recently proposed as an alternative to
Nyquist sampling (followed by compression) for scenarios where
measurements can be costly~\cite{rice}. The whole premise is that
sparse signals (signals with many zero or negligible elements in a
known basis) can be recovered with far fewer measurements than the
ambient dimension of the signal itself. In fact, the major
breakthrough in this area has been the demonstration that $\ell_1$
minimization can efficiently recover a sufficiently sparse vector from
a system of underdetermined linear equations \cite{CT}.

The conventional approach to compressed sensing assumes no prior
information on the unknown signal other than the fact that it is
sufficiently sparse in a particular basis. In many applications,
however, additional prior information is available. In fact, in many
cases the signal recovery problem (which compressed sensing attempts
to address) is a detection or estimation problem in some statistical
setting. Some recent work along these lines can be found in
\cite{Baraniuk Detection} (which considers compressed detection and
estimation) and \cite{Bayesian CS} (on Bayesian
compressed sensing). In other cases, compressed sensing may be the
inner loop of a larger estimation problem that feeds prior information
on the sparse signal (e.g., its sparsity pattern) to the compressed
sensing algorithm.

In this paper we will consider a particular model for the sparse
signal that assigns a probability of being zero or nonzero to each
entry of the unknown vector. The standard compressed sensing model is
therefore a special case where these probabilities are all equal (for
example, for a $k$-sparse vector the probabilities will all be
$\frac{k}{n}$, where $n$ is the number of entries of the unknown
vector). As mentioned above, there are many situations where such
prior information may be available, such as in natural images, medical
imaging, or in DNA microarrays where the signal is often {\em block
  sparse}, i.e., the signal is more likely to be nonzero in certain
blocks rather than in others \cite{Mihailo BS-CS}.

While it is possible (albeit cumbersome) to study this model in
full generality, in this paper we will focus on the case where the
entries of the unknown signal fall into two categories: in the first
set (with cardinality $n_1$) the probability of being nonzero is
$P_1$, and in the second set (with cardinality $n_2 = n-n_1$) this
probability is $P_2$. (Clearly, in this case the sparsity will with
high probability be around $n_1P_1+n_2P_2$.) This model is rich enough
to capture many of the salient features regarding prior information,
while being simple enough to allow a very thorough analysis. While it is
in principle possible to extend our techniques to models with more
than two categories of entries, the analysis becomes increasingly
tedious and so is beyond the scope of this short paper.

The contributions of the paper are the following. We propose a
weighted $\ell_1$ minimization approach for sparse recovery where the
$\ell_1$ norms of each set are given different weights
$w_i$ ($i=1,2$). Clearly, one would want to give a larger weight to
those entries whose probability of being nonzero is less (thus further
forcing them to be zero).\footnote{A somewhat related method that uses
  weighted $\ell_1$ optimization is Candes et al \cite{Candes
    Reweighted}. The main difference is that there is no prior
  information and at each step the $\ell_1$ optimization is re-weighted
  using the estimates of the signal obtained in the last minimization
  step.} The second contribution is to compute explicitly the
relationship between the $p_i$, the $w_i$, the $\frac{n_i}{n}$,
$i=1,2$ and the number of measurements
so that the unknown signal can be recovered with overwhelming
probability as $n\rightarrow\infty$ (the so-called weak
threshold) for measurement matrices drawn from an iid Gaussian
ensemble. The analysis uses the high-dimensional geometry techniques
first introduced by Donoho and Tanner \cite{DT,D} (e.g., Grassman
angles) to obtain sharp thresholds for compressed sensing. However,
rather than use the {\em neighborliness} condition used in
\cite{DT,D}, we find it more convenient to use the null space
characterization of Xu and Hassibi \cite{Weiyu GM,StXuHa08}. The
resulting Grassmanian manifold approach is a general framework for
incorporating additional factors into compressed sensing: in
\cite{Weiyu GM} it was used to incorporate measurement noise; here it
is used to incorporate prior information and weighted $\ell_1$
optimization. Our analytic results allow us to compute the optimal
weights for any $p_1$, $p_2$, $n_1$, $n_2$. We also provide simulation
results to show the advantages of the weighted method over standard $\ell_1$
minimization.

\section{Model}
\label{sec:model}

The signal is represented by a $n\times 1$ vector $\x=(x_1,x_2,...,x_n)^T$ of real valued numbers, and is \emph{non-uniformly sparse} with sparsity factor $P_1$ over the (index) set $K_1\subset\{1,2,..n\}$ and sparsity factor $P_2$ over the set $K_2=\{1,2,...,n\} \setminus K_1$. By this, we mean that if $i\in K_1$, $x_i$ is a nonzero element with probability $P_1$ and zero with probability $1-P_1$. However, if $i\in K_2$ the probability of $x_i$ being nonzero is $P_2$. We assume that $|K_1|=n_1$ and $|K_2|=n_2=n-n_1$. The measurement matrix $\bf A$ is a $m\times n$ ($\frac{m}{n}=\delta<1$) matrix with i.i.d $\mathcal{N}(0,1)$ entries. The observation vector is denoted by $\y$ and obeys the following:
\begin{equation}
\y = \bf A\x
\end{equation}

As mentioned in Section \ref{sec:Intro}, $\ell_1$-minimization can recover a vector $\x$ with $k=\mu n$ non-zeros, provided $\mu$ is less than a known function of $\delta$. $\ell_1$ minimization has the following form:

\begin{equation}\label{eq:l_1}
\min_{\bf A\x=\y}{\|\x\|_1}
\end{equation}

(\ref{eq:l_1}) is a linear programming and can be solved polynomially fast ($O(n^3)$). However, it fails to encapsulate additional prior information of the signal nature, might there be any such information. One might simply think of modifying (\ref{eq:l_1}) to a weighted $\ell_1$ minimization as follows:

\begin{equation} \label{eq:weighted l_1}
\min_{\bf A\x=\y}{\|\x\|_{\w1}}=\min_{Ax=y}{\sum_{i=1}^{n}{w_i |x_i|}}
\end{equation}

The index $\w$ is an indication of the $n\times 1$ positive weight vector. Now the question is what is the optimal set of weights, and can one improve the recovery threshold using the weighted $\ell_1$ minimization of (\ref{eq:weighted l_1}) with those weights rather than (\ref{eq:l_1})? We have to be more clear with the objective at this point and what we mean by extending the recovery threshold. First of all note that the vectors generated based on the model described above can have any arbitrary number of nonzeros. However, their support size is typically (with probability arbitrary close to one) around $n_1P_1+n_2P_2)$. Therefore, there is no such notion of strong threshold as in the case of~\cite{DT}. We are asking the question of for what $P_1$ and $P_2$ signals generated based on this model can be recovered with overwhelming probability as $n\rightarrow \infty$. Moreover we are wondering if by adjusting $w_i$'s according to $P_1$ and $P_2$ can one extend the typical sparsity to dimension ratio ($\frac{n_1P_1+n_2P_2}{n}$) for which reconstruction is successful with high probability. This is the topic of next section.

\section{Computation of the Weak Threshold}

Because of the partial symmetry of the sparsity of the signal we know that the optimum weights  should take only two positive values $W_1$ and $W_2$. In other words\footnote{Also we may assume WLG that $W_1=1$}
\begin{equation*}
\forall i\in\{1,2,\cdots,n\} ~~~w_i=\left\{\begin{array}{c}W_1 ~ \text{if $i\in K_1$}\\ W_2 ~ \text{if $i\in K_2$}
\end{array}\right.
\end{equation*}

Let $\x$ be a random sparse signal generated based on the non-uniformly sparse model of section \ref{sec:model} and be supported on the set $K$. $K$ is called \emph{$\epsilon$-typical} if $\left||K\cap K_1| -n_1P_1\right|\leq \epsilon n$ and $\left||K\cap K_2|-n_2P_2\right|\leq \epsilon n$. Let $E$ be the event that $x$ is recovered by (\ref{eq:weighted l_1}). Then:
\begin{eqnarray*}
\Prob[E^c] &=& \Prob[E^c | \text{$K$ is $\epsilon$-typical} ]\Prob[\text{$K$ is $\epsilon$-typical}]  \\
&+& \Prob[E^c|  \text{$K$ not $\epsilon$-typical} ]\Prob[\text{$K$ not $\epsilon$-typical}]
\end{eqnarray*}

\noindent For any fixed $\epsilon > 0 $  $\Prob[\text{$K$ not $\epsilon$-typical}]$ will exponentially approach zero as $n$ grows according to the law of large numbers. So, to bound the probability of failed recovery we may assume that $K$ is $\epsilon$-typical for any small enough $\epsilon$. Therefore we just consider the case $|K|=k=n_1P_1+n_2P_2$. Similar to the null-space condition of \cite{StXuHa08}, we present a  necessary and sufficient condition for $\x$ to be the solution to (\ref{eq:weighted l_1}). It is as follows:

\begin{equation*}
\forall Z\in \mathcal{N}(A) ~~~ \sum_{i\in K}{w_i|Z_i|} \leq \sum_{i\in \overline{K}}{w_i|Z_i|}
\end{equation*}

Where $\mathcal{N}(A)$ denotes the right nullspace of $A$. We can upper bound $\Prob(E^c)$ with $P_{K,-}$ which is the probability that a vector $\x$ of a specific sign pattern (say non-positive) and supported on the specific set $K$ is not recovered correctly by (\ref{eq:weighted l_1}) (A difference between this upper bound and the one in~\cite{Weiyu GM} is that here there is no ${n \choose k }2^k$ factor, and that is because we have fixed the support set $K$ and the sign pattern of $\x$). Exactly as done in \cite{Weiyu GM}, by restricting  $\x$ to the cross-polytope $\{\x\in \mathbb{R}^n\mid \|x\|_{\w1}=1\}$\footnote{This is because the restricted polytope totally surrounds the origin in $\mathbb{R}^n$ }, and noting that $\x$ is on a $(k-1)$-dimensional face $F$ of the skewed cross-polytope $\text{SP}=\{\y\in R^n~|~\|\y\|_{\w1}\leq 1\}$, $P_{K,-}$ is essentially the probability that a uniformly chosen $(n-m)$-dimensional subspace $\Psi$ shifted by the point $x$, namely $(\Psi +
x)$, intersects SP nontrivially at some other point besides $\x$. $P_{K,-}$ is then interpreted as the complementary Grassmann angle~\cite{Grunbaumpaper} for
the face F with respect to the polytope SP under the Grassmann manifold $Gr_(n-m)(n)$. Building on the works by L.A.Santal\"{o} \cite{santalo} and
P.McMullen \cite{McMullen} etc. in high dimensional integral geometry and convex polytopes, the complementary Grassmann angle for
the $(k-1)$-dimensional face $F$ can be explicitly expressed as the sum of products of internal angles and external angles \cite{Grunbaumbook}:

\begin{equation}
2\times \sum_{s \geq 0}\sum_{G \in \Im_{m+1+2s}(\text{SP})}
{\beta(F,G)\gamma(G,\text{SP})}, \label{eq:angformula}
\end{equation}
where $s$ is any nonnegative integer, $G$ is any $(m+1+2s)$-dimensional face of the skewed crosspolytope ($\Im_{m+1+2s}(\text{SP})$ is the set of all such faces), $\beta(\cdot,\cdot)$ stands for the internal angle and $\gamma(\cdot,\cdot)$ stands for the external angle. The internal angles and external angles are basically defined as follows \cite{Grunbaumbook}\cite{McMullen}:
\begin{itemize}
\item An internal angle $\beta(F_1, F_2)$ is the fraction of the
hypersphere $S$ covered by the cone obtained by observing the face
$F_2$ from the face $F_1$. The internal angle $\beta(F_1, F_2)$ is
defined to be zero when $F_1 \nsubseteq F_2$ and is defined to be
one if $F_1=F_2$.
\item An external angle $\gamma(F_3, F_4)$ is the fraction of the
hypersphere $S$ covered by the cone of outward normals to the
hyperplanes supporting the face $F_4$ at the face $F_3$.
The external angle $\gamma(F_3, F_4)$ is defined to be zero when $F_3
\nsubseteq F_4$ and is defined to be one if $F_3=F_4$.
\end{itemize}

Note that $F$ here is a typical face of SP corresponding to a typical set $K$. $\beta(F,G)$ depends not only on the dimension of the face $G$, but also depends on the number of its vertices supported on $K_1$ and $K_2$. In other words if $G$ is supported on a set $L$, then $\beta(F,G)$ is only a function of $|L\cap K_1|$ and $|L\cap K_2|$. So we write $\beta(F,G)=\beta(t_1,t_2)$ and similarly $\gamma(G,SP)=\gamma(t_1,t_2)$ where $t_1=|L\cap K_1|-n_1P_1$ and $t_2=|L\cap K_2|-n_2P_2$. Combining the notations and counting the number of faces $G$, (\ref{eq:angformula}) leads to:

\begin{align}
&\Prob(E^c) \leq  \nonumber \\
&\sum_{{\tiny\begin{array}{c}0\leq t_1 \leq(1-P_1)n_1\\0\leq t_2 \leq (1-P_2)n_2\\ t_1+t_2 > m-k+1\end{array}}} 2^{t_1+t_2}{{\tiny(1-P_1)}n_1\choose t_1}{(1-P_2)n_2\choose t_2}\times \nonumber \\
&\beta(t_1,t_2)\gamma(t_1,t_2)~+~O(e^{-cn}) \label{eq:sumformula}
\end{align} 
\noindent for some $c>0$. As $n\rightarrow\infty$ each term in (\ref{eq:sumformula}) behaves like $exp\{n\psi_{com}(t_1,t_2)-n\psi_{int}(t_1,t_2)-n\psi_{ext}(t_1,t_2)\}$ where $\psi_{com}$ $\psi_{int}$ and $\psi_{ext}$ are the  combinatorial exponent, the internal angle exponent and the external angle exponent of the each term respectively. It can be shown that the necessary and sufficient condition for (\ref{eq:sumformula}) to tend to zero is that $\psi(t_1,t_2)= \psi_{com}(t_1,t_2)-\psi_{int}(t_1,t_2)- \psi_{ext}(t_1,t_2)$ be uniformly negative for all $t_1$ and $t_2$ in (\ref{eq:sumformula}).

In the following sub-sections we will try to evaluate the internal and external angles for a typical face $F$, and a face $G$ containing $F$and try to give closed form upper bounds for them. We combine the terms together and compute the exponents using Laplace method in section \ref{sec:exp cal.} and derive thresholds for nonnegativity of the cumulative exponent using.

\subsection{Derivation of the Internal Angles}
\label{sec: Derivation of Inter.}
Suppose that $F$ is a \emph{typical} $(k-1)$-dimensional face of the skewed
cross-polytope
\begin{equation*}
\text{SP}=\{\y\in R^n~|~ \|\y\|_{\w1}=\sum_{i=1}^{n}w_i|\y_i|
\leq 1\}
\end{equation*}
supported on the subset $K$ with $|K|=k\approx n_1P_1 + n_2P_2$. Let $G$ be a $l-1$ dimensional face of SP supported on the set $L$ with $F\subset G$. Also, let $|L\cap K_1|=t_1$ and $|L\cap K_2|=t_2$.

First we can prove the following lemma:
\begin{lemma}
Let $\text{Con}_{F^{\perp},G}$ be the positive cone of all the vectors $\x\in \mathbb{R}^{n}$ that take the form:
\begin{equation}
-\sum_{i=1}^{k}{b_i \times e_i}+\sum_{i=k+1}^{l}{b_i \times e_i},
\label{eq:vform}
\end{equation}
where $b_i, 1 \leq i \leq l$ are nonnegative real numbers  and
\begin{eqnarray*}
\sum_{i=1}^{k}{w_i b_i}=\sum_{i=k+1}^{l}{w_i b_{i}}   ~~~ \frac{b_1}{w_1}=\frac{b_2}{w_2}=\cdots=\frac{b_k}{w_k}
\end{eqnarray*}

Then
\begin{eqnarray}
&&\nonumber\int_{\text{Con}_{F^{\perp},G}}{e^{-\|\x\|^2}}\,d\x=\beta(F,G)
V_{l-k-1}(S^{l-k-1}) \\
&&\times \int_{0}^{\infty}{e^{-r^2}} r^{l-k-1}\,dx =\beta(F,G) \cdot
\pi^{(l-k)/2},
 \label{eq:inaxchdirect}
\end{eqnarray}
where $V_{l-k-1}(S^{l-k-1})$ is the spherical volume of the
$(l-k-1)$-dimensional sphere $S^{l-k-1}$.  
\end{lemma}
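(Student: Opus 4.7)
The plan is to interpret $\text{Con}_{F^{\perp},G}$ as the image, under an orthogonal sign flip that preserves the Gaussian integrand, of the tangent cone $C_{F,G}$ of $G$ at a relatively interior point of $F$, intersected with the subspace $V := T_G \cap T_F^{\perp}$, where $T_F$ and $T_G$ are the tangent spaces of the affine hulls of $F$ and $G$. Once this identification is in place, the stated identity reduces to the standard polar-coordinate representation of a solid angle as a Gaussian integral over its cone.

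First I would write down $T_F$ and $T_G$ from the vertex descriptions of the faces: $F$ has vertices $-e_i/w_i$ for $i\in K$, and $G$ has those together with $+e_j/w_j$ for $j\in L\setminus K$. This yields $T_F=\{u:u_i=0\ \forall i\notin K,\ \sum_{i\in K}w_i u_i=0\}$ (dimension $k-1$) and $T_G=\{u:u_i=0\ \forall i\notin L,\ \sum_{i\in K}w_i u_i=\sum_{j\in L\setminus K}w_j u_j\}$ (dimension $l-1$). Orthogonality to $T_F$ inside $\mathbb{R}^n$ forces the $K$-block of any vector in $T_F^{\perp}$ to be a scalar multiple of $(w_1,\ldots,w_k)$, so $V$ has dimension $l-k$ and consists of vectors $u$ with $u_i=c\,w_i$ for $i\in K$ and $u_j$ free for $j\in L\setminus K$ subject to $\sum_{j\in L\setminus K}w_j u_j=c\sum_{i\in K}w_i^2$. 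The tangent cone from a relative interior point $\x_F$, which satisfies $(\x_F)_i<0$ strictly on $K$ and vanishes on $L\setminus K$, is $C_{F,G}=\{u\in T_G:u_j\ge 0\text{ for }j\in L\setminus K\}$, with no sign constraint on the $K$-block. Because $C_{F,G}$ is invariant under translations along $T_F$, its orthogonal projection onto $V$ coincides with $C_{F,G}\cap V$; after the orthogonal involution $u_i\mapsto -u_i$ on the $K$-coordinates, which fixes the integrand $e^{-\|u\|^2}$, this intersection is exactly $\text{Con}_{F^{\perp},G}$, with $b_i=c\,w_i\ge 0$ giving the proportionality $b_1/w_1=\cdots=b_k/w_k$ and the $T_G$ balance giving $\sum_{i=1}^{k}w_i b_i=\sum_{j=k+1}^{l}w_j b_j$.

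The final step is the polar-coordinate identity in the $(l-k)$-dimensional inner product space $V$: for any cone $C\subset V$ with fractional solid angle $\beta$,
\begin{equation*}
\int_{C}e^{-\|\x\|^2}\,d\x=\beta\,V_{l-k-1}(S^{l-k-1})\int_{0}^{\infty}e^{-r^2}r^{l-k-1}\,dr=\beta\,\pi^{(l-k)/2}.
\end{equation*}
By the paper's own definition of the internal angle as the fractional solid angle of $C_{F,G}$ within $V$, the $\beta$ here is precisely $\beta(F,G)$, which gives the lemma. The only real obstacle is the geometric bookkeeping in the middle step: making sure that $V$ has the claimed parameterization, that the tangent cone intersected with $V$ (modulo the harmless sign flip on $K$) coincides with the lemma's cone, and that the two algebraic constraints in the lemma, namely $b_i/w_i$ constant on $K$ and the weight balance, are equivalent respectively to the $T_F^{\perp}$ and $T_G$ conditions. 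Everything else is either a definition or a one-line Gaussian integral.
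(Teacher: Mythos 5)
Your argument is correct: the paper omits its proof of this lemma, and your route---identifying $\text{Con}_{F^{\perp},G}$ with the tangent cone of $G$ at a relative interior point of $F$ intersected with $T_G\cap T_F^{\perp}$ (modulo the measure-preserving sign flip on the $K$-coordinates), then applying the polar-coordinate Gaussian identity for the solid angle of a cone---is exactly the standard argument used for the analogous lemma in the cited Xu--Hassibi Grassmann-manifold framework. The geometric bookkeeping (the dimension count for $V$, the equivalence of the two algebraic constraints with the $T_F^{\perp}$ and $T_G$ conditions, and the invariance of the tangent cone under translation by $T_F$) all checks out, so there is nothing to add.
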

\begin{proof}
Omitted for brevity
\end{proof}

From (\ref{eq:inaxchdirect}) we can find the expression for the internal angle.
Define $U\subseteq
\mathbb{R}^{l-k+1}$ as the set of all nonnegative vectors $(x_1,x_2,\cdots,x_{l-k+1} )$ satisfying:
\begin{center}
{\scriptsize $x_p \geq 0,~1 \leq p\leq l-k+1$}  $(\sum_{p=1}^{k}w^2_p)x_1 = \sum_{p=k+1}^{l}w^2_p x_{p-k+1}$
\end{center}
and define $f(x_1,~\cdots,~ x_{l-k+1}):U \rightarrow
\text{Con}_{F^{\perp},G}$ to be the linear and bijective map
\begin{align*}
f(x_1,\cdots,x_{l-k+1})=-\sum_{p=1}^{k} x_1w_p
\e_p+\sum_{p=k+1}^{l} x_{p-k+1}w_p \e_p
\end{align*}
Then

\begin{align}
&\int_{ \text{Con}_{F^{\perp},G} }{e^{-\|{\x}'\|^2}}\,d{\x}' = \int_{U}{e^{-\|f(\x)\|^2}}\,df(\x) \nonumber \\
&= |J(A)| \int_{\Gamma}{e^{-\|f(x)\|^2}}\,d{x_2} \cdots dx_{l-k+1} \nonumber \\
&=|J(A)| \int_{\Gamma}e^{-(\sum_{p=1}^{k}{w_p^2})x_1^2-\sum_{p=k+1}^{l}{w_p^2x_{p-k+1}^2} } \,d{x_2} \cdots
dx_{l-k+1} \label{eq:vsigular}
\end{align}

$\Gamma$ is the region described by 
\begin{equation}
(\sum_{p=1}^{k}w^2_p)x_1 = \sum_{p=k+1}^{l}w^2_p x_{p-k+1} , x_p \geq 0~2 \leq p \leq l-k+1
\end{equation}
where $|J(A)|$ is due to the change of
integral variables and is essentially the determinant of the Jacobian of the variable transform given by the $l\times l-k$ matrix $A$ given by:  

\begin{align}
A_{i,j}= \left\{\begin{array}{cc} -\frac{1}{\Omega}w_iw_{k+j}^2 &{\scriptsize 1\leq i\leq k ,1\leq j\leq l-k} \\ w_{i} & {\scriptsize k+1\leq i\leq l , j=i-k} \\ 0 & \text{Otherwise} \end{array}\right.
\end{align}

\noindent where $\Omega = \sum_{p=1}^{k}w_p^2$. Now $|J(A)|=\sqrt{\det(A^T A)}$. By finding the eigenvalues of $A^T A$ we obtain:
\begin{equation}
|J(A)| = W_1^{t_1}W_2^{t_2}\sqrt{\frac{\Omega + t_1W_1^2 + t_2W_2^2}{\Omega}}
\end{equation}

\noindent Now we define a random variable
\begin{equation*}
Z = (\sum_{p=1}^{k}w^2_p)X_1 - \sum_{p=k+1}^{l}w^2_p X_{p-k+1}
\end{equation*}
where $X_1, X_2, \cdots, X_{l-k+1}$ are independent random variables,
with $X_p \sim HN(0,\frac{1}{2w_{p+k-1}^2})$, $2 \leq p \leq (l-k+1)$, as
half-normal distributed random variables and $X_1\sim N(0,
\frac{1}{2\sum_{p=1}^{k}{w_p^2}})$ as a normal distributed random variable. Then by
inspection, (\ref{eq:vsigular}) is equal to
\begin{equation*}
  C p_{Z}(0).
\end{equation*}
where $p_{Z}(\cdot)$ is the probability density function for the
random variable $Z$ and $p_{Z}(0)$ is
the probability density function $p_{Z}(\cdot)$ evaluated at the
point $Z=0$, and
{\small
\begin{align}
C&=\frac{\sqrt{\pi}^{l-k+1}}{2^{l-k}}\prod_{q=k+1}^{l}\frac{1}{w_q}{\sqrt{\sum_{p=1}^{k}w_p^2}}~|J(A)| \nonumber\\
&= \frac{\sqrt{\pi}^{l-k+1}}{2^{l-k}} \sqrt{(n_1P_1+t_1)W_1^2+(n_2P_2+t_2)W_2^2} \label{eq:C}
\end{align}
}

Combining (\ref{eq:inaxchdirect}) and (\ref{eq:vsigular}):
\begin{equation}\label{eq:internal angle formula}
\beta(t_1,t_2) = \pi^{\frac{k-l}{2}}Cp_Z(0)
\end{equation}

%
%
%
%
%
%

\begin{figure}[t]
\centering
  \includegraphics[width=0.35\textwidth]{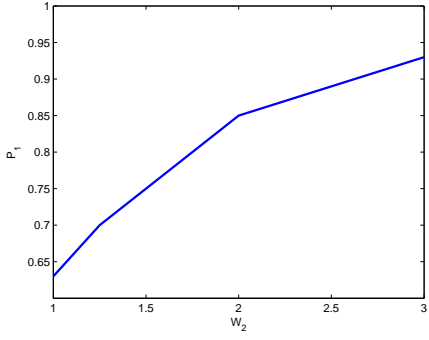}\\
\caption{{\scriptsize  Recoverable $P_1$ threshold as a function of $W_2$. $P_2=0.1$, $m=0.75n$}}
\label{fig:exponent}
\end{figure}

\subsection{Derivation of the External Angle}
Without loss of generality, assume $K=\{n-k+1, \cdots,n\}$. Consider
the $(l-1)$-dimensional face
\begin{equation*}
G=\text{conv}\{\frac{\e_{n-l+1}}{w_{n-l+1}} , ... ,\frac{\e_{n-k}}{w_{n-k}} , \frac{\e_{n-k+1}}{w_{n-k+1}},
...,\frac{\e_{n}}{w_{n}} \}
\end{equation*}
of the skewed cross-polytope $\text{SP}$. The $2^{n-l}$ outward
normal vectors of the supporting hyperplanes of the facets
containing $G$ are given by
\begin{equation*}
\{\sum_{i=1}^{n-l} j_{i}w_i \e_i+\sum_{p=n-l+1}^{n} w_i \e_i, j_{i}\in\{-1,1\}\}.
\end{equation*}

\begin{figure*}[t]
  \centering
  \subfloat[]{\label{fig:PRvsP1firstsetting}\includegraphics[width=0.35\textwidth]{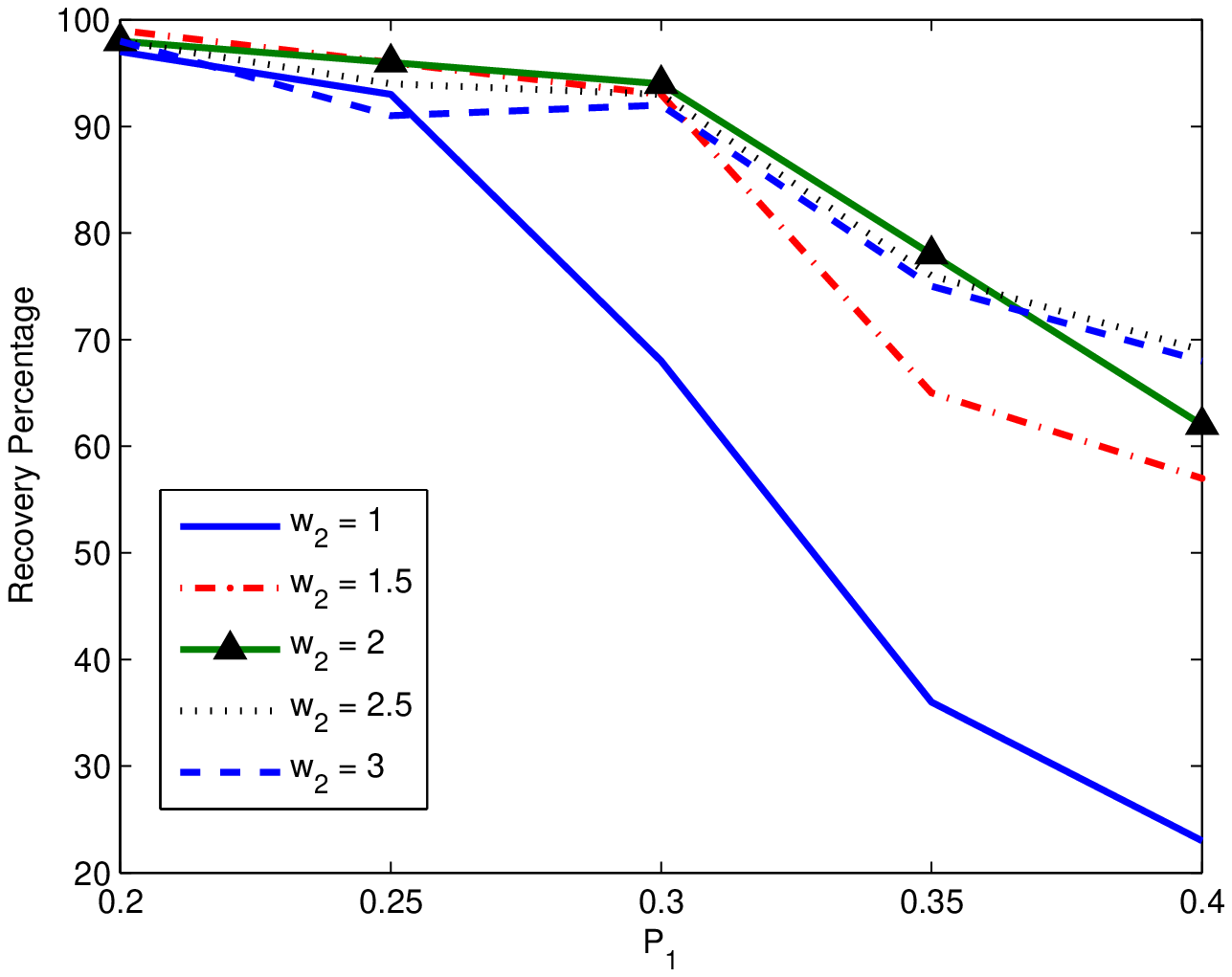}}
  \hspace{1 in}
  \subfloat[]{\label{fig:PRvsP1Optimalw}\includegraphics[width=0.35\textwidth]{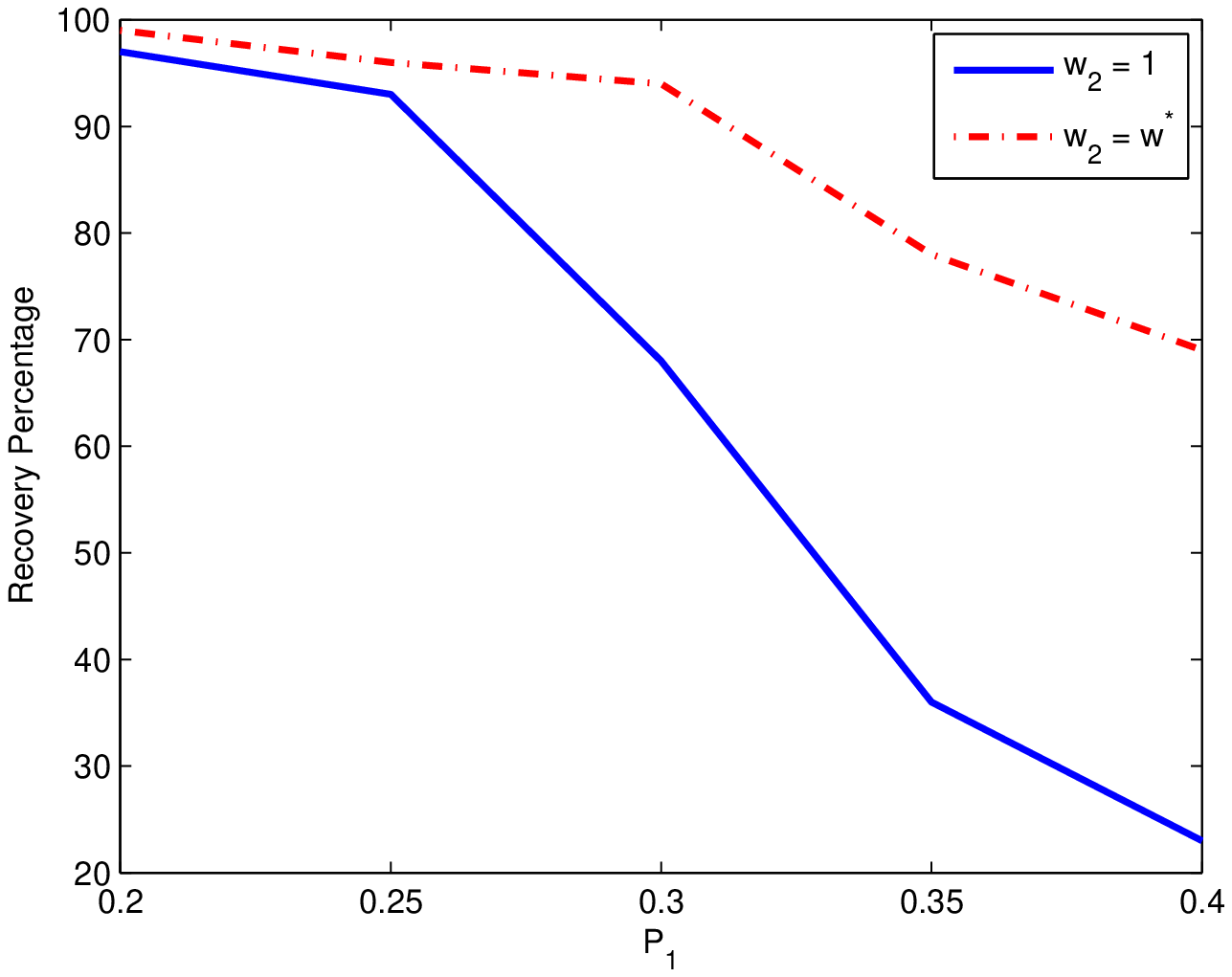}}
    \caption{{\scriptsize Successful recovery percentage for weighted $\ell_1$ minimization with different weights and suboptimal weights in a nonuniform sparse setting. $P_2=0.05$ and $m=0.5n$}}
  \label{fig:weightedimprovement}
\end{figure*}

Then the outward normal cone $c(G, \text{SP})$ at the face $G$ is
the positive hull of these normal vectors. Thus
\begin{align}
\int_{c(G,\text{SP})}{e^{-\|x\|^2}}\,dx&=\gamma(G,SP)
V_{n-l}(S^{n-l}) \int_{0}^{\infty}{e^{-r^2}}r^{n-l}\,dx \nonumber\\
&=\gamma(G,\text{SP}).\pi^{(n-l+1)/2},
 \label{eq:axch}
\end{align}
where $V_{n-l}(S^{n-l})$ is the spherical volume of the
$(n-l)$-dimensional sphere $S^{n-l}$. 
Now define $U$ to be the set
\begin{equation*}
\{x \in R^{n-l+1} \mid x_{n-l+1} \geq 0, | x_i/w_i| \leq
x_{n-l+1}, 1\leq i \leq (n-l)\}
\end{equation*}
and define $f(x_1,~\cdots,~ x_{n-l+1}):U \rightarrow c(G,\text{SP})$
to be the linear and bijective map
\begin{eqnarray*}
f(x_1,~\cdots,~ x_{n-l+1})&=&\sum_{i=1}^{n-l} x_i
\e_i+\sum_{i=n-l+1}^{n} w_i x_{n-l+1}\e_i .
\end{eqnarray*}
Then
\small{
\begin{align}
&\int_{c(G,\text{SP})}{e^{-\|x'\|^2}}\,dx' \nonumber = |J(A)|\int_{U}{e^{-\|f(x)\|^2}}\,dx \nonumber\\
&=|J(A)|\int_{0}^{\infty}
\int_{-w_1x_{n-l+1}}^{w_1x_{n-l+1}}
\cdots\int_{-w_{n-l}x_{n-l+1}}^{w_{n-l}x_{n-l+1}}\nonumber\\
& e^{-x_1^2-\cdots
-x_{n-l}^2-(\sum_{i=n-l+1}^{n}w_i^2)x_{n-l+1}^{2} } \,dx_{1} \cdots \,dx_{n-l+1}\nonumber\\
&=|J(A)|\int_{0}^{\infty}
e^{-(\sum_{i=n-l+1}^{n}w_i^2)x^2}\times\nonumber\\
& \left(\int_{-W_1 x}^{W_1 x} e^{-y^2}\,dy \right )^{(1-P_1)n_1-t_1} \left(\int_{-W_2x}^{W_2x} e^{-y^2}\,dy \right )^{(1-P_2)n_2-t_2} \,dx \nonumber\\
&={\tiny 2^{n-l}\int_{0}^{\infty}e^{-x^2} \left (\int_{0}^{ \frac{W_1x}{\ \xi} } e^{-y^2}\,dy \right)^{r_1} \left (\int_{0}^{ \frac{W_2 x}{\xi} } e^{-y^2}\,dy \right)^{r_2}\,dx,} \nonumber\\
\label{eq:external angle formula}
\end{align}
}
where
$\xi=\xi(t_1,t_2)=\sqrt{\sum_{i=n-l+1}^{n}w_i^2},~ r_1=(1-P_1)n_1-t_1~~r_2=(1-P_2)n_2-t_2$. \normalsize $|J(A)|=\sqrt{\det(A^TA)}=\xi$ is resulting from the change of variable in the integral.

\begin{figure}[t]
\centering
  \includegraphics[width=0.35\textwidth]{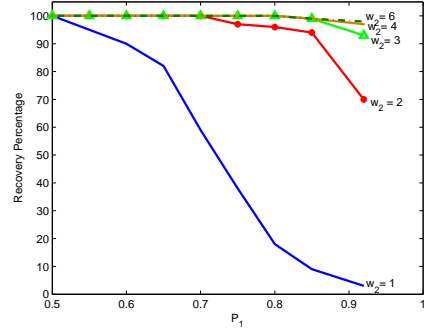}\\
\caption{{\scriptsize Successful recovery percentage for different weights. $P_2=0.1$ and $m=0.75n$}}
\label{fig:PRvsP1}
\end{figure}
\section{Exponent Calculation}
\label{sec:exp cal.}
Using the Laplace method we compute the angle exponents. They are given in the following theorems, the proofs of which are omitted for brevity.
we assume $n_1=\gamma_1n$, $n_2=\gamma_2n$ and WLG $W_1=1$, $W_2=W$.

\begin{theorem}\label{external exponent}
Let $t_1=t_1'n$, $t_2=t_2'n$, $g(x)=\frac{2}{\sqrt{\pi}}e^{-\frac{x^2}{2}}$, $G(x)=\frac{2}{\sqrt{\pi}}\int_{0}^{x}e^{-y^2}dy$. Also define $C=(t_1'+\gamma_1P_1)+W^2(t_2'+\gamma_2P_2)$, $D_1=\gamma_1(1-P_1)-t_1'$ and $D_2=\gamma_2(1-P_2)-t_2'$. Let $x_0$ be the unique solution to $x$ of the following:
\begin{equation*}
2C-\frac{g(x)D_1}{xG(x)}-\frac{Wg(Wx)D_2}{xG(Wx)}=0
\end{equation*} 
Then 
\begin{equation}
\psi_{ext}(t_1,t_2) = Cx_0^2-D_1\log{G(x_0)}-D_2\log{G(Wx_0)}
\end{equation}

\end{theorem}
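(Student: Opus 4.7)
The plan is to apply Laplace's asymptotic method to the integral representation of $\gamma(G,\text{SP})$ in (\ref{eq:external angle formula}). First I would substitute $\int_0^z e^{-y^2}\,dy = \tfrac{\sqrt{\pi}}{2} G(z)$ and change variable $u = x/\xi$, where $\xi^2 = (n_1 P_1+t_1)W_1^2+(n_2P_2+t_2)W_2^2 = nC$. The prefactors combine very cleanly: using $r_i = nD_i$ and $n-l = r_1+r_2 = n(D_1+D_2)$, the combined factor $2^{n-l}\pi^{-(n-l+1)/2}(\sqrt{\pi}/2)^{r_1+r_2}$ collapses to $\pi^{-1/2}$, so (after absorbing $\xi=\sqrt{nC}$ into a polynomial prefactor)
\[
\gamma(G,\text{SP}) \;=\; \sqrt{nC/\pi}\cdot \int_0^\infty e^{n\phi(u)}\,du,
\]
with $\phi(u) := -Cu^2 + D_1\log G(u) + D_2\log G(Wu)$.

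Next I would locate the interior maximizer by setting $\phi'(u)=0$. Since $G' = g$, this gives
\[
-2Cu + D_1 \frac{g(u)}{G(u)} + D_2 W \frac{g(Wu)}{G(Wu)} = 0,
\]
which is precisely the stationarity equation in the statement (divide by $u$). Laplace's method around this saddle then yields $\tfrac{1}{n}\log\gamma\to\phi(x_0)$, whence $\psi_{\text{ext}}(t_1,t_2) = -\phi(x_0) = Cx_0^2 - D_1\log G(x_0) - D_2\log G(Wx_0)$, as claimed.

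The main obstacle I expect is justifying that the stationary point $x_0$ is unique and is the global maximum on $(0,\infty)$. The key fact is that the Mills-type ratio $h(z) := g(z)/G(z)$ is strictly decreasing on $(0,\infty)$: a direct differentiation gives $h'(z) = -\bigl(2z\,G(z)+g(z)\bigr)g(z)/G(z)^2 < 0$. Hence $\phi'(u) = -2Cu + D_1 h(u) + D_2 W h(Wu)$ is the sum of strictly decreasing functions of $u$, and since $\phi'(0^+)=+\infty$ (because $h(u)\sim 1/u$ near $0$) while $\phi'(+\infty)=-\infty$, there is a unique positive root $x_0$, automatically the global maximizer with $\phi''(x_0)<0$. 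Standard tail estimates (using $\phi(u)\le\phi(x_0)$ outside a small neighborhood of $x_0$, together with a Gaussian approximation of $\phi$ near $x_0$) then control the subexponential corrections in Laplace's method, so the $\sqrt{nC/\pi}$ prefactor and the Laplace $1/\sqrt{n|\phi''(x_0)|}$ factor do not affect the exponent, completing the derivation.
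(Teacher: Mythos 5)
Your proposal is correct and follows exactly the route the paper intends: the paper omits the proof but states the exponents are obtained ``using the Laplace method'' applied to the integral in (\ref{eq:external angle formula}), and your bookkeeping of the prefactors ($r_1+r_2=n-l$ collapsing $2^{n-l}\pi^{-(n-l+1)/2}(\sqrt{\pi}/2)^{r_1+r_2}$ to $\pi^{-1/2}$, and $\xi^2=nC$), the stationarity equation, and the monotonicity argument for uniqueness of $x_0$ are all sound. The only caveat is that you implicitly take $g=G'$, i.e.\ $g(x)=\frac{2}{\sqrt{\pi}}e^{-x^2}$, whereas the theorem writes $g(x)=\frac{2}{\sqrt{\pi}}e^{-x^2/2}$; this appears to be a typo in the paper, and your reading is the consistent one.
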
  
\vspace{5pt}

\begin{theorem}\label{internal exponent}
Let $b=\frac{t_1+W^2t_2}{t_1+t_2}$ and  $\varphi(.)$ and $\Phi(.)$ be the standard Gaussian pdf and cdf functions respectively. Also let $Q(s)=\frac{t_1\varphi(s)}{(t_1+t_2)\Phi(s)}+\frac{Wt_2\varphi(Ws)}{(t_1+t_2)\Phi(Ws)}$. Define the function $\hat{M}(s)=-\frac{s}{Q(s)}$ and solve for $s$ in $\hat{M}(s)=\frac{m}{mb+\Omega}$. Let the unique solution be $s^*$ and set $y=s^*(b-\frac{1}{\hat{M}(s^*)})$. Compute the rate function $\Lambda^*(y)= sy -\frac{t_1}{t_1+t_2}\Lambda_1(s)-\frac{t_1}{t_1+t_2}\Lambda_1(Ws)$ at the point $s=s^*$, where $\Lambda_1(s) = \frac{s^2}{2} +\log(2\varphi(s))$.
The internal angle exponent is then given by:
\begin{equation}
\psi_{int}(t_1,t_2) = (\Lambda^*(y)+\frac{m}{2\Omega}y+\log2)(t_1'+t_2')
\end{equation}

\end{theorem}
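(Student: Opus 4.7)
The plan is to extract the exponential decay rate of the internal angle $\beta(t_1,t_2)$ from the integral representation $\beta(t_1,t_2) = \pi^{(k-l)/2}\,C\,p_Z(0)$ established in (\ref{eq:internal angle formula}). Combining the explicit form of $C$ from (\ref{eq:C}) with the $\pi^{(k-l)/2}$ factor gives $\pi^{(k-l)/2}C=\tfrac{\sqrt{\pi}}{2^{T}}\sqrt{\Omega+bT}$, where $T=t_1+t_2$, $T'=T/n$, and $b=(t_1+W^2t_2)/T$. This prefactor contributes only an $O(\sqrt{n})$ factor together with the geometric $2^{-T}$, so
\begin{equation*}
-\frac{1}{n}\log\beta(t_1,t_2) \;=\; T'\log 2 \;-\; \frac{1}{n}\log p_Z(0) \;+\; o(1),
\end{equation*}
which already accounts for the $T'\log 2$ term in the statement and reduces the problem to computing the exponential decay rate of $p_Z(0)$.

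Next I would compute $p_Z(0)$ via the saddle-point method applied to the Laplace inversion $p_Z(0)=\frac{1}{2\pi i}\int_{c-i\infty}^{c+i\infty}M_Z(s)\,ds$. Independence of $X_1$ and the $X_p$'s, together with the half-normal MGF $E[e^{sY}]=2e^{s^2\sigma^2/2}\Phi(s\sigma)$ for $Y\sim HN(\sigma^2)$, gives
\begin{equation*}
M_Z(s) \;=\; 2^{T}\,e^{s^2(\Omega+bT)/4}\,\Phi(-s/\sqrt{2})^{t_1}\,\Phi(-sW/\sqrt{2})^{t_2}.
\end{equation*}
The stationary condition $(\log M_Z)'(s^*)=0$, after the rescaling $\tilde s=-s^*/\sqrt{2}$, reduces to $-\tilde s(\Omega+bT)=T\,Q(\tilde s)$, equivalently $\hat M(\tilde s)=T/(Tb+\Omega)$; this is exactly the defining equation for $s^*$ in the theorem, with the symbol $m$ appearing in the statement playing the role of $T=t_1+t_2$.

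Finally I would evaluate $\log M_Z$ at the saddle, divide by $n$, and rearrange. The two half-normal products contribute $t_1\log(2\Phi(-s^*/\sqrt{2}))+t_2\log(2\Phi(-s^*W/\sqrt{2}))$, which collects into the empirical average $T\bigl[(t_1/T)\Lambda_1(\tilde s)+(t_2/T)\Lambda_1(W\tilde s)\bigr]$ with $\Lambda_1(s)=s^2/2+\log(2\Phi(s))$. Applying the saddle equation to the Gaussian quadratic $s^{*2}(\Omega+bT)/4$ then recasts the whole expression in the Cram\'er form $T'\bigl(\Lambda^*(y)+(m/2\Omega)y\bigr)$ with $y=s^*(b-1/\hat M(s^*))$. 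Putting this together with the $T'\log 2$ prefactor delivers the stated identity.

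The hard part will be this last algebraic repackaging: using the saddle-point equation in just the right way to split the Gaussian quadratic into one piece that merges with the $\Lambda_1$ sums to form the Legendre dual $\Lambda^*$, and a second piece that becomes the linear-in-$y$ correction $(m/2\Omega)y$. Secondary technical points---uniqueness of the real saddle $s^*$ (which follows from strict convexity of $\Lambda_1$ and hence monotonicity of $\hat M$ on $\mathbb R$), admissibility of the steepest-descent contour through it, and the fact that the Gaussian fluctuation factor in the saddle-point asymptotics is only an $O(\log n)$ correction---are standard.
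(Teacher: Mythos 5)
The paper omits its proof of this theorem, but your route---the saddle-point (Laplace) evaluation of $p_Z(0)$ starting from the representation $\beta(t_1,t_2)=\pi^{(k-l)/2}Cp_Z(0)$ of (\ref{eq:internal angle formula})---is exactly the method the authors announce ("Using the Laplace method") and is the standard Donoho/Xu--Hassibi computation this paper is built on. Your intermediate steps check out: $\pi^{(k-l)/2}C=\tfrac{\sqrt{\pi}}{2^{T}}\sqrt{\Omega+bT}$ with $T=t_1+t_2$; the MGF $M_Z(s)=2^{T}e^{s^2(\Omega+bT)/4}\Phi(-s/\sqrt2)^{t_1}\Phi(-sW/\sqrt2)^{t_2}$ is correct; and after the rescaling $\tilde s=-s/\sqrt2$ the stationarity condition is precisely $\hat M(\tilde s)=T/(Tb+\Omega)$, which confirms your reading that the symbol $m$ in the theorem must denote $t_1+t_2$ (the internal angle cannot depend on the number of measurements). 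You also silently use the correct $\Lambda_1(s)=s^2/2+\log(2\Phi(s))$ (cdf, not pdf) and the correct weight $t_2/T$ on $\Lambda_1(Ws)$, both of which are typos in the theorem as printed.

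The one step you defer---``the last algebraic repackaging''---is asserted rather than carried out, and it does not land exactly on the stated formula. Carrying it through: at the saddle, $\log M_Z(s^*)=\tfrac{\tilde s^2\Omega}{2}+T\bar\Lambda(\tilde s)$ with $\bar\Lambda=\tfrac{t_1}{T}\Lambda_1(\cdot)+\tfrac{t_2}{T}\Lambda_1(W\cdot)$, the saddle equation gives $y=\bar\Lambda'(\tilde s)=-\tilde s\Omega/T$, and substituting $\bar\Lambda(\tilde s)=\tilde sy-\Lambda^*(y)$ yields
\begin{equation*}
-\tfrac{1}{n}\log\beta(t_1,t_2)=\Bigl(\Lambda^*(y)+\tfrac{t_1+t_2}{2\Omega}\,y^2+\log 2\Bigr)(t_1'+t_2')+o(1),
\end{equation*}
i.e.\ the middle term is quadratic in $y$, not the linear $\tfrac{m}{2\Omega}y$ of the statement (equivalently it equals $-\tfrac{s^*}{2}y$). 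So either you must exhibit an identity I do not see that converts $\tfrac{T}{2\Omega}y^2$ into $\tfrac{T}{2\Omega}y$, or---far more likely, given the other typos in this theorem---you should flag the exponent as $\tfrac{(t_1+t_2)}{2\Omega}y^2$ and not claim to have recovered the printed formula verbatim. As written, your proposal promises to derive a formula that the correct execution of your own (sound) plan does not produce; you should finish the algebra and state the corrected form explicitly.
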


As an illustration of these results, for $P_2=0.1$ and $\delta=\frac{m}{n}=0.75$ using Theorems \ref{internal exponent} and \ref{external exponent} and combining the exponents with the combinatorial exponent, we have calculated the threshold for $P_1$ for different values of $w_2$ in the range $[1,3]$  , below which the signal can be recovered. The curve is depicted in Figure \ref{fig:exponent}. As expected, the curve is suggesting that in this setting weighted $\ell_1$ minimization boosts the weak threshold in comparison with $\ell_1$ minimization. This is verified in the next section by some examples.

\section{Simulation}
We demonstrate by some examples that appropriate weights can boost the recovery percentage. We fix $P_2$ and  $n=2m=200$, and try $\ell_1$ and weighted $\ell_1$ minimization for various values of $P_1$. We choose $n_1=n_2=\frac{n}{2}$ Figure \ref{fig:PRvsP1firstsetting} shows one such comparison for $P_2=0.05$ and different values of $w_2$. Note that the optimal value of $w_2$ varies as $P_1$ changes. Figure \ref{fig:PRvsP1Optimalw} illustrates how the optimal weighted $\ell_1$ minimization surpasses the ordinary $\ell_1$ minimization. The optimal curve is basically achieved by selecting the best weight of  Figure \ref{fig:PRvsP1firstsetting} for each single value of $P_1$. Figure \ref{fig:PRvsP1} shows the result of simulations in another setting where $P_2=0.1$ and $m=0.75n$ (similar to the setting of the previous section). It is clear from the figure that the recovery success threshold for $P_1$ has been shifted higher when using weighted $\ell_1$ minimization rather than standard $\ell_1$ minimization. Note that this result very well matches the theoretical result of Figure \ref{fig:exponent}.

\bibliographystyle{IEEEbib}

\end{document}